\newtheorem{theorem}{Theorem}
\newtheorem{definition}{Definition}
\newtheorem{lemma}{Lemma}
\newtheorem{remark}{Remark}
\newtheorem{example}{Example}
\newcommand{\ra}{\rightarrow}
\newcommand{\R}{{\mathbb R}}
\newcommand{\C}{{\mathbb C}}
\newcommand{\Z}{{\mathbb Z}}
\newcommand{\N}{{\mathbb N}}
\newcommand{\K}{{\mathcal K}}
\newcommand{\Cc}{{\mathcal C}}
\newcommand{\ds}{\displaystyle}
\newcommand{\poisson}{\mathfrak{p}}
\newcommand{\Gammab}{\mathsf{\Gamma}}
\newcommand{\konsi}{\mathfrak{K}}
\tikzset{
type1/.style={circle,fill=white!100,inner sep=1.5pt},
type2/.style={circle,draw=white,fill=black!100,inner sep=1.5pt},
shorten >=.5pt
}
\title{The necessity of wheels in universal quantization formulas}
\author{Giuseppe Dito}
\address{Institut de Math{\'e}matiques de Bourgogne,
Universit{\'e} de Bourgogne,
B.P. 47870,
21078 Dijon Cedex,
France}
\email{giuseppe.dito@u-bourgogne.fr}
\urladdr{http://monge.u-bourgogne.fr/gdito}
\keywords{Deformation quantization, Universal formulas, Kontsevich graphs}
\subjclass[2010]{53D55, 81Q20, 16E40}
\newcommand{\mone}{
\begin{aligned}
m_1(f,g)&=
\begin{tikzpicture}[baseline=12pt]
\node[type1] (f) at (0,0)  {};
\node[type1] (g) at (1,0)  {};
\node[type2] (1) at (.5,1)  {};
\node[below] at (f) {$f$};
\node[below] at (g) {$g$};
\draw[-latex',thick] (1) to (f);
\draw[-latex',densely dotted,thick] (1) to (g);
\end{tikzpicture}
\end{aligned}
}
\newcommand{\mtwo}{
\begin{aligned}
m_2(f,g)&=\frac{1}{2}\ 
\begin{tikzpicture}[baseline=12pt]
\node[type1] (f) at (0,0)  {};
\node[type1] (g) at (1,0)  {};
\node[type2] (1) at (0,1)  {};
\node[type2] (2) at (1,1)  {};
\node[below] at (f) {$f$};
\node[below] at (g) {$g$};
\draw[-latex',thick] (1) to (f);
\draw[-latex',densely dotted,thick] (1) to (g);
\draw[-latex',thick] (2) to (f);
\draw[-latex',thick,densely dotted] (2) to (g);
\end{tikzpicture}
+ \frac{1}{3}\ 
\begin{tikzpicture}[baseline=12pt]
\node[type1] (f) at (0,0)  {};
\node[type1] (g) at (1,0)  {};
\node[type2] (1) at (0,1)  {};
\node[type2] (2) at (1,1)  {};
\node[below] at (f) {$f$};
\node[below] at (g) {$g$};
\draw[-latex',thick] (1) to (f);
\draw[-latex',densely dotted,thick] (1) to (2);
\draw[-latex',thick] (2) to (f);
\draw[-latex',thick,densely dotted] (2) to (g);
\end{tikzpicture}
+ \frac{1}{3}\ 
\begin{tikzpicture}[baseline=12pt]
\node[type1] (f) at (0,0)  {};
\node[type1] (g) at (1,0)  {};
\node[type2] (1) at (0,1)  {};
\node[type2] (2) at (1,1)  {};
\node[below] at (f) {$f$};
\node[below] at (g) {$g$};
\draw[-latex',thick] (1) to (f);
\draw[-latex',densely dotted,thick] (1) to (g);
\draw[-latex',thick] (2) to (1);
\draw[-latex',thick,densely dotted] (2) to (g);
\end{tikzpicture}
\end{aligned}
}
\newcommand{\mthree}{
\begin{aligned}
m_3(f,g)&=
\frac{1}{6}\ 
\begin{tikzpicture}[baseline=12pt]
\node[type1] (f) at (0,0)  {};
\node[type1] (g) at (1.5,0)  {};
\node[type2] (1) at (0,1)  {};
\node[type2] (2) at (.75,1)  {};
\node[type2] (3) at (1.5,1) {};
\node[below] at (f) {$f$};
\node[below] at (g) {$g$};
\draw[-latex',thick] (1) to (f);
\draw[-latex',densely dotted,thick] (1) to(g);
\draw[-latex',thick] (2) to (f);
\draw[-latex',thick,densely dotted] (2) to (g);
\draw[-latex',thick] (3) to (f);
\draw[-latex',thick,densely dotted] (3) to (g);
\end{tikzpicture}
+\frac{1}{3}\ 
\begin{tikzpicture}[baseline=12pt]
\node[type1] (f) at (0,0)  {};
\node[type1] (g) at (1.5,0)  {};
\node[type2] (1) at (0,1)  {};
\node[type2] (2) at (.75,1)  {};
\node[type2] (3) at (1.5,1) {};
\node[below] at (f) {$f$};
\node[below] at (g) {$g$};
\draw[-latex',thick] (1) to (f);
\draw[-latex',densely dotted,thick] (1) to(2);
\draw[-latex',thick] (2) to (f);
\draw[-latex',thick,densely dotted] (2) to (g);
\draw[-latex',thick] (3) to (f);
\draw[-latex',thick,densely dotted] (3) to (g);
\end{tikzpicture}
+\frac{1}{3}\ 
\begin{tikzpicture}[baseline=12pt]
\node[type1] (f) at (0,0)  {};
\node[type1] (g) at (1.5,0)  {};
\node[type2] (1) at (0,1)  {};
\node[type2] (2) at (.75,1)  {};
\node[type2] (3) at (1.5,1) {};
\node[below] at (f) {$f$};
\node[below] at (g) {$g$};
\draw[-latex',thick] (1) to (f);
\draw[-latex',densely dotted,thick] (1) to(g);
\draw[-latex',thick] (2) to (f);
\draw[-latex',thick,densely dotted] (2) to (g);
\draw[-latex',thick] (3) to (2);
\draw[-latex',thick,densely dotted] (3) to (g);
\end{tikzpicture}\\
&+\frac{1}{6}\ 
\begin{tikzpicture}[baseline=12pt]
\node[type1] (f) at (0,0)  {};
\node[type1] (g) at (1.5,0)  {};
\node[type2] (1) at (0,1)  {};
\node[type2] (2) at (.75,1)  {};
\node[type2] (3) at (1.5,1) {};
\node[below] at (f) {$f$};
\node[below] at (g) {$g$};
\draw[-latex',thick] (1) to (f);
\draw[-latex',densely dotted,thick] (1) to[out=30,in=150] (3);
\draw[-latex',thick] (2) to (f);
\draw[-latex',thick,densely dotted] (2) to (g);
\draw[-latex',thick] (3) to (2);
\draw[-latex',thick,densely dotted] (3) to (g);
\end{tikzpicture}
+\frac{1}{6}\ 
\begin{tikzpicture}[baseline=12pt]
\node[type1] (f) at (0,0)  {};
\node[type1] (g) at (1.5,0)  {};
\node[type2] (1) at (0,1)  {};
\node[type2] (2) at (.75,1)  {};
\node[type2] (3) at (1.5,1) {};
\node[below] at (f) {$f$};
\node[below] at (g) {$g$};
\draw[-latex',thick] (1) to (f);
\draw[-latex',densely dotted,thick] (1) to(2);
\draw[-latex',thick] (2) to (f);
\draw[-latex',thick,densely dotted] (2) to (g);
\draw[-latex',thick] (3) to[out=150,in=30] (1);
\draw[-latex',thick,densely dotted] (3) to (g);
\end{tikzpicture}
+\frac{1}{3}\ 
\begin{tikzpicture}[baseline=12pt]
\node[type1] (f) at (0,0)  {};
\node[type1] (g) at (1.5,0)  {};
\node[type2] (1) at (0,1)  {};
\node[type2] (2) at (.75,1)  {};
\node[type2] (3) at (1.5,1) {};
\node[below] at (f) {$f$};
\node[below] at (g) {$g$};
\draw[-latex',thick] (1) to (f);
\draw[-latex',densely dotted,thick] (1) to[out=30,in=150] (3);
\draw[-latex',thick] (2) to[out=200,in=-20]  (1);
\draw[-latex',thick,densely dotted] (2) to[out=-20,in=200] (3);
\draw[-latex',thick] (3) to (f);
\draw[-latex',thick,densely dotted] (3) to (g);
\end{tikzpicture}\\
&+\frac{1}{3}\ 
\begin{tikzpicture}[baseline=12pt]
\node[type1] (f) at (0,0)  {};
\node[type1] (g) at (1.5,0)  {};
\node[type2] (1) at (0,1)  {};
\node[type2] (2) at (.75,1)  {};
\node[type2] (3) at (1.5,1) {};
\node[below] at (f) {$f$};
\node[below] at (g) {$g$};
\draw[-latex',thick] (1) to (f);
\draw[-latex',densely dotted,thick] (1) to (g);
\draw[-latex',thick] (2) to[out=200,in=-20] (1);
\draw[-latex',thick,densely dotted] (2) to[out=-20,in=200] (3);
\draw[-latex',thick] (3) to[out=150,in=30] (1);
\draw[-latex',thick,densely dotted] (3) to (g);
\end{tikzpicture}
+\frac{1}{6}\ 
\begin{tikzpicture}[baseline=12pt]
\node[type1] (f) at (0,0)  {};
\node[type1] (g) at (1.5,0)  {};
\node[type2] (1) at (0,1)  {};
\node[type2] (2) at (.75,1)  {};
\node[type2] (3) at (1.5,1) {};
\node[below] at (f) {$f$};
\node[below] at (g) {$g$};
\draw[-latex',thick] (1) to (f);
\draw[-latex',densely dotted,thick] (1) to[out=30,in=150] (3);
\draw[-latex',thick] (2) to (f);
\draw[-latex',thick,densely dotted] (2) to[out=-20,in=200] (3);
\draw[-latex',thick] (3) to (f);
\draw[-latex',thick,densely dotted] (3) to (g);
\end{tikzpicture}
+\frac{1}{6}\ 
\begin{tikzpicture}[baseline=12pt]
\node[type1] (f) at (0,0)  {};
\node[type1] (g) at (1.5,0)  {};
\node[type2] (1) at (0,1)  {};
\node[type2] (2) at (.75,1)  {};
\node[type2] (3) at (1.5,1) {};
\node[below] at (f) {$f$};
\node[below] at (g) {$g$};
\draw[-latex',thick] (1) to (f);
\draw[-latex',densely dotted,thick] (1) to (g);
\draw[-latex',thick] (2) to[out=200,in=-20] (1);
\draw[-latex',thick,densely dotted] (2) to (g);
\draw[-latex',thick] (3) to[out=150,in=30] (1);
\draw[-latex',thick,densely dotted] (3) to (g);
\end{tikzpicture}
\end{aligned}
}
\begin{document}

\begin{abstract}
In the context of formal deformation quantization, we provide an elementary argument showing that 
any universal quantization formula necessarily involves graphs with wheels.
\end{abstract}

\maketitle

\section{Introduction} \label{sec:intro}
Since its inception in the 70's by Flato and his 
collaborators~\cite{BFFLS1}, deformation quantization has undergone several major developments.
The most spectacular one being the proof by Kontsevich~\cite{K2003}
of the existence of star-products on any smooth Poisson manifold as a consequence
of his Formality theorem. This theorem establishes the existence of an $L_\infty$ quasi-isomorphism
between two differential graded Lie algebras naturally attached to a manifold, namely, the Hochschild complex endowed with Gerstenhaber bracket on one side, and the space of polyvector fields
endowed with the Schouten-Nijenhuis bracket and trivial differential, on the other side.

The proof relies on the discovery of an explicit formula (in the affine case) 
in terms of Feynman graphs and their weights, and corresponds to the perturbation 
series of a Poisson sigma model~\cite{CF}.
An important feature of Kontsevich's quantization formula is that it is a universal quantization formula
in the sense that it only depends on the Poisson structure and its derivatives of all orders.
Moreover, among the family of graphs considered by Kontsevich there are graphs with wheels, i.e. having 
oriented cycles as subgraphs. The presence of graphs with wheels
discards direct generalizations of Kontsevich formula to infinite-dimensional
spaces (quantum field theory) as their computation involves traces leading to
an ill-defined star-product. Of course for special classes of Poisson brackets it is still possible 
to find quantization formulas
which do not involve wheels. This is the case for the well-known Moyal product or
the BCH-quantization~\cite{berezin,drinfeld,gutt} on the dual of a Lie algebra, 
but these examples do not reflect the general situation.

In this paper, we show that a universal quantization formula, i.e. a correspondence that associates
to any Poisson bracket defined on the affine space a star-product, 
does necessarily involves graphs with wheels.  The proof consists essentially in reducing the general case
to the obstruction found by Penkava and Vanhaecke~\cite{PV2000} in their study of the quantization
of polynomial Poisson brackets.

The paper is organized as follows. Section~2 is devoted to preliminaries: basic notions on deformation quantization
and Kontsevich formula are briefly recalled there. In Section~3,  we study universal 2-cocycles 
in the Hoschchild complex and show that the second cohomology space of the subcomplex consisting of universal
cochains without wheels is one-dimensional and is generated by the Poisson bivector. The main theorem
is proved in Section~4. 

I am indebted to T.~Willwacher for letting me know another derivation of the 
main result of this paper. His proof~\cite{Will} 
does not rely on the obstruction of Penkava and Vanhaecke.

\section{Preliminaries} 
Let $X$ be a smooth $d$-dimensional manifold. 
Let $A=C^\infty(X)$ be the algebra of smooth complex-valued
functions on $X$ with product $m_0\colon A\times A\rightarrow A$, that is,
the usual product of smooth functions on $X$. 

\subsection{Deformation quantization}
Recall that a Poisson bivector $\poisson$ on $X$ is a section
in $\Gammab(\wedge^{2} TX)$ such that 
the Poisson bracket associated to $\poisson$
$$
\{f,g\}=\langle\poisson,df\wedge dg\rangle, \quad f,g\in A,
$$
endows $A$ with a structure of Lie algebra satisfying the Leibniz property.
In local coordinates $(x^1,\ldots,x^d)$ the Poisson bracket reads
$$
\{f,g\}= \sum_{1\leq i,j\leq d}\poisson^{ij}\partial_i f \partial_j g,
$$
where $\partial_k$ denotes the partial derivative with respect to $x^k$.

Let $A\llbracket\hbar\rrbracket$ be the space of formal series in $\hbar$ with coefficients in $A$.
A star-product~\cite{BFFLS1} is a $\C\llbracket\hbar\rrbracket$-bilinear product on $A\llbracket\hbar\rrbracket$
which is an associative deformation of the algebra of smooth functions on $X$:
\begin{equation}\label{eq:star}
\ds c_\hbar = m_0 + \sum_{k\geq1} \hbar^k c_k,
\end{equation}
where the $c_k$ are bidifferential operators vanishing on constants
and such that 
the antisymmetric part of $c_1$ is equal to the Poisson 
bivector~$\poisson$, i.e., a star-product is a noncommutative associative deformation of the
pointwise product of functions in the direction
of the Poisson bivector.

\subsection{Hochschild cohomology}
The normalized differential Hochschild cochain complex 
of the associative algebra $A$ with 
values in itself 
$\Cc^\bullet(A,A)=\oplus_{m\geq0} \Cc^m(A,A)$ 
consists of polydifferential operators on $X$ that are vanishing on constants.  
Locally, an $m$-cochain $C\in\Cc^m(A,A)$ has the form
\begin{equation}\label{cochain}
C(f_1,\ldots,f_m) = \sum C_{\alpha_1\cdots\alpha_m} \partial^{\alpha_1}f_1\cdots \partial^{\alpha_m}f_m,\quad\quad f_1,\ldots,f_m\in A,
\end{equation}
where the sum is finite and runs over multi-indices 
$\alpha_i\in\N^d$ such that $\vert \alpha_i\vert \geq 1$, and the $C_{\alpha_1\cdots\alpha_m}$
are locally defined smooth functions on $X$. 

As usual, we shall consider the Hochschild 
complex as a  $\Z$-graded vector space with a shift in the degree:
$D^\bullet_\mathrm{poly}(X)=\Cc^\bullet(A,A)[1]$. Hence
$$
D^k_\mathrm{poly}(X) = 
\begin{cases} 
\Cc^{k+1}(A,A) &\text{for } k\geq -1, \\
\{0\} &\text{otherwise}. 
\end{cases}
$$

The Gerstenhaber bracket $[\cdot,\cdot]_G$ on 
$D^\bullet_\mathrm{poly}(X)$ is defined on homogeneous elements
$D_i \in D^{k_i}_\mathrm{poly}(X)$ by:
$$
[D_1, D_2]_G = D_1\circ D_2 - (-1)^{k_1 k_2}D_2\circ D_1,
$$
where $\circ\colon D^{k_1}_\mathrm{poly}(X)\times D^{k_2}_\mathrm{poly}(X)
\rightarrow D^{k_1+k_2}_\mathrm{poly}(X)$ 
is a composition law for polydifferential operators:
\begin{equation*}
\begin{split}
(D_1\circ D_2)&(f_0,\ldots, f_{k_1+k_2}) \\
&=
\sum_{0\leq j\leq k_1} (-1)^{jk_2} D_1(f_0,\ldots, f_{j-1}, D_2(f_j,
\ldots,f_{j+k_2}), f_{j+k_2+1},\ldots f_{k_1+k_2}).
\end{split}
\end{equation*}

The Hochschild differential is  given by $\delta = [m_0,\ \cdot\ ]_G$.
It is the standard Hochschild differential $d$ up to a sign: 
for $D\in D^{k}_\mathrm{poly}(\R^d)$, 
we have $\delta D=(-1)^k d D$.
Recall that $(D^\bullet_\mathrm{poly}(X), [\cdot,\cdot]_G, \delta)$ 
is a differential graded Lie algebra.
\begin{remark}
In terms of the Gertenhaber bracket, the associativity of the 
product~(\ref{eq:star}) is equivalent to the Maurer-Cartan equations:
\begin{equation}\label{eq:mc}
\delta c_k + \frac{1}{2} \sum_{\substack{a+b=k \\ a,b\geq1}} [c_a,c_b]_G =0, \quad\text{ for all } 
k\geq1.
\end{equation}
A deformation $c_\hbar$ is said to define an associative deformation up to order $r$,
if (\ref{eq:mc}) is satisfied for $k=1,2,\ldots,r$.
\end{remark}

The cohomology $H^\bullet(A,A)$ of the complex $(\Cc^\bullet(A,A),\delta)$ 
is the space of  polyvectors
$\Gammab(\wedge^\bullet TX)$. This is a smooth version of the Hochschild-Kostant-Rosenberg (HKR)
theorem due to Vey~\cite{Vey}, still called HKR theorem. We denote by 
$T^\bullet_\mathrm{poly}(X)$ the graded vector space 
$T^\bullet_\mathrm{poly}(X)=\oplus_{k\in\mathbb{Z}}T^k_\mathrm{poly}(X)$, 
where 
$$
T^k_\mathrm{poly}(X) = 
\begin{cases} 
\Gammab(\wedge^{k+1} TX) &\text{for } k\geq -1, \\
\{0\} &\text{otherwise}. 
\end{cases}
$$
$T^\bullet_\mathrm{poly}(X)$
is endowed with the Schouten-Nijenhuis bracket $[\cdot,\cdot]_{SN}$,
On decomposable tensors it is given by:
\begin{equation}\label{schouten}
\begin{split}
[&\xi_0\wedge\cdots\wedge\xi_k,\eta_0\wedge\cdots\wedge\eta_l]_{SN}\\
&=\sum_{0\leq i\leq k}\ \sum_{0\leq j\leq l} (-1)^{i+j} [\xi_i,\eta_j]
\wedge\xi_0 \wedge\cdots\wedge\hat\xi_i\wedge\cdots\wedge\xi_k\wedge
\eta_0\wedge\cdots\wedge\hat\eta_j\wedge\cdots\wedge\eta_l.
\end{split}
\end{equation}
Recall that the Jacobi identity for a bivector $\pi \in \Gammab(\wedge^{2} TX)$
is equivalent to the condition $[\pi,\pi]_{SN}=0$.

$(T^\bullet_\mathrm{poly}(X),[\cdot,\cdot]_{SN})$ 
is a graded Lie algebra that is considered as a differential
graded Lie algebra with trivial differential~$\mathbf{0}$.

The Formality theorem~\cite{K2003} establishes the existence of 
an $L_\infty$ quasi-isomorphism between the differential graded Lie algebras 
$(T^\bullet_\mathrm{poly}(X),[\cdot,\cdot]_{SN},\mathbf{0})$ 
and $(D^\bullet_\mathrm{poly}(X), [\cdot,\cdot]_G, \delta)$.
A remarkable consequence of the Formality theorem is the existence
of deformation quantization of any smooth Poisson manifold: 
\begin{theorem}[\cite{K2003}]
Let $\poisson$ be a Poisson bivector on a smooth manifold $X$. Then
there exists a star-product $c_\hbar$ on $X$ such that 
$\frac{1}{2}\big(c_\hbar(f,g) - c_\hbar(g,f)\big) 
= \hbar \{f,g\} + O(\hbar^2)$.
\end{theorem}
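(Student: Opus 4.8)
The plan is to deduce this existence statement directly from the Formality theorem, by recognizing that the Poisson structure is a Maurer--Cartan element on the polyvector side and transporting it across the $L_\infty$ quasi-isomorphism to the polydifferential side, where it becomes the deformation part of an associative star-product. First I would note that, because the differential on $T^\bullet_\mathrm{poly}(X)$ is trivial, the Maurer--Cartan equation in $(T^\bullet_\mathrm{poly}(X)\llbracket\hbar\rrbracket,[\cdot,\cdot]_{SN},\mathbf{0})$ for a degree-one element $\gamma$ reduces to $\frac{1}{2}[\gamma,\gamma]_{SN}=0$. Taking $\gamma=\hbar\poisson$, this is exactly $[\poisson,\poisson]_{SN}=0$, which is equivalent to the Jacobi identity as recalled above; hence $\hbar\poisson$ is a Maurer--Cartan element. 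The prefactor $\hbar$ places $\gamma$ in the ideal $\hbar\,T^\bullet_\mathrm{poly}(X)\llbracket\hbar\rrbracket$, so every formal sum below converges $\hbar$-adically.

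Next I would invoke the standard fact that an $L_\infty$ morphism transports Maurer--Cartan elements. Writing $U=(U_n)_{n\geq1}$ for the Taylor components of the quasi-isomorphism furnished by the Formality theorem, with $U_n$ graded antisymmetric of degree $1-n$, I would set
$$
c:=\sum_{n\geq1}\frac{1}{n!}\,U_n(\underbrace{\hbar\poisson,\ldots,\hbar\poisson}_{n})=\sum_{k\geq1}\hbar^k c_k,\qquad c_k=\frac{1}{k!}\,U_k(\poisson,\ldots,\poisson).
$$
Unwinding the defining $L_\infty$ relations for $U$ against the identity $[\hbar\poisson,\hbar\poisson]_{SN}=0$ yields precisely $\delta c+\frac{1}{2}[c,c]_G=0$, that is, the Maurer--Cartan equations~(\ref{eq:mc}). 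Therefore $c_\hbar=m_0+c$ is an associative product, and since each $U_n$ takes values in polydifferential operators vanishing on constants, so does every $c_k$.

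Finally I would pin down the leading term. By definition of an $L_\infty$ quasi-isomorphism, $U_1$ is a chain map inducing an isomorphism on cohomology; under the HKR identification it sends a polyvector field to its associated polydifferential operator. Applied to the bivector $\poisson$ it produces $c_1$, whose skew-symmetrization is $\langle\poisson,df\wedge dg\rangle=\{f,g\}$. Because the remaining summands $U_n(\hbar\poisson,\ldots,\hbar\poisson)$ contribute only at orders $\hbar^n$ with $n\geq2$, I obtain $\frac{1}{2}\big(c_\hbar(f,g)-c_\hbar(g,f)\big)=\hbar\{f,g\}+O(\hbar^2)$, as required.

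The main obstacle is the transport-of-Maurer--Cartan-elements lemma itself: verifying that the exponential-type formula above carries solutions of the Schouten--Nijenhuis Maurer--Cartan equation to solutions of the Gerstenhaber one is the only step demanding genuine work beyond quoting the Formality theorem, and it relies on a careful bookkeeping of the $L_\infty$ compatibility relations among the $U_n$. Everything else --- the reformulation of the Jacobi and associativity conditions as Maurer--Cartan equations, and the normalization via $U_1$ being the HKR map --- amounts to unwinding the definitions.
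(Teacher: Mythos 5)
Your proposal is correct and follows exactly the route the paper intends: the paper states this theorem without proof, citing \cite{K2003}, and presents it precisely as ``a remarkable consequence of the Formality theorem,'' which is the derivation you carry out (the Jacobi identity $[\poisson,\poisson]_{SN}=0$ makes $\hbar\poisson$ a Maurer--Cartan element, the $L_\infty$ quasi-isomorphism transports it to a solution of the Maurer--Cartan equations~(\ref{eq:mc}) on the polydifferential side, and $U_1$ being the HKR map fixes the first-order term). Your sketch fills in the standard details the paper leaves implicit, so there is nothing to flag beyond noting that within the paper's statement of formality for general $X$ your argument goes through as written.
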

Actually, Kontsevich gives an explicit description of the 
$L_\infty$ quasi-isomorphism for $X=\R^d$ in terms of graphs and weights.
We shall briefly recall a few notions on graphs that we will need in our discussion
and refer to \cite{K2003} for details.

\subsection{Graphs}

A simple directed graph is a graph whose edges are oriented 
and such that it does not contain loops:
\begin{tikzpicture}[baseline=0pt,scale=1]
\node[type2] (1) at (0,0)  {}; 
\draw[-latex',thick]  (1) to[out=140,in=20,loop] (1);
\end{tikzpicture}
or multiple edges:
\begin{tikzpicture}[baseline=0pt,scale=1]
\node[type2] (1) at (0,0)  {}; 
\node[type2] (2) at (.5,.4)  {};
\draw[-latex',thick]  (1) to[out=60,in=200] (2);
\draw[-latex',thick]  (1) to[out=0,in=250] (2);
\end{tikzpicture}
.
The indegree (resp.~outdegree) of a vertex is the number of edges ending (resp.~starting) 
at that vertex.

Following~\cite{K2003}, we consider a family of graphs (with a slight departure
from the convention of \cite{K2003}).

\begin{definition}\label{def:k}
The set $\K$ consists of all the simple directed graphs $\Gamma$ satisfying:
\begin{enumerate}
\item The set of vertices $V_\Gamma$ is finite and is a disjoint union of nonempty sets: 
$V_\Gamma=V_\Gamma^1 \sqcup V_\Gamma^2$.
Vertices belonging to $V_\Gamma^1$ (resp. $V_\Gamma^2$) are said
of type~1 (resp.~2);
\item Each vertex of type 1 is of outdegree 2;
\item Each vertex of type 2 is of indegree at least~1 and of outdegree 0;
\item Vertices and edges are labeled.
\end{enumerate}
We denote by $\K_{n,m}$ for $m,n\geq1$ the subset of $\K$ consisting of graphs having
$n$ vertices of type~1 and $m$ vertices of type~2. Thus a graph in $\K_{n,m}$ 
has $2n$ edges.
\end{definition}
We shall make a slight abuse of notation 
by dropping the labels on the vertices of type~1.

We set $X=\R^d$ ($d\geq 2$) and
$(x^1,\ldots,x^d)$ designates a coordinate system on $X$.
Let $\poisson$ be a smooth Poisson bivector on $X$.  The Poisson 
bracket of two smooth functions  $f,g$ is graphically represented by
a graph in $\K_{1,2}$:
\begin{gather*}
\{f,g\}=\sum_{1\leq i,j \leq d} \poisson^{ij}\;\partial_i\; f \partial_j g\ =
\begin{tikzpicture}[baseline=12pt,scale=.8]
\node[type1] (f) at (0,0)  {};
\node[type1] (g) at (1,0)  {};
\node[type2] (1) at (.5,1)  {};
\node[below] at (f) {$f$};
\node[below] at (g) {$g$};
\draw[-latex',thick] (1) to (f);
\draw[-latex',densely dotted,thick] (1) to (g);
\end{tikzpicture}.
\end{gather*}

More generally, one associates a cochain $B^\poisson_\Gamma$ in $\Cc^m(A,A)$
to any graph $\Gamma\in \K_{n,m}$ by letting the edges of $\Gamma$
act by partial derivatives (see \cite{K2003} for details.)
It is convenient to denote vertices of type~1 by $\bullet$ and vertices of type~2 
by letters $f,g,\ldots$ that will eventually be representing functions in $A$.
The labels of the edges will be graphically specified  by solid and dotted lines.
As an example, to the graph $\Gamma\in \K_{4,3}$ below one associates
$B^\poisson_\Gamma\in \Cc^3(A,A)$ by
\begin{equation}\label{tridiff}
B^\poisson_\Gamma(f,g,h)=
\begin{tikzpicture}[baseline=0pt,scale=1]
\node[type1] (f) at (0,0)  {};
\node[type1] (g) at (1,0)  {};
\node[type1] (h) at (2,0)  {};
\node[type2] (1) at (0.25,1)  {};
\node[type2] (2) at (1,1)  {};
\node[type2] (3) at (1.75,1) {};
\node[type2] (4) at (.7,.6) {};
\node[below] at (f) {$f$};
\node[below] at (g) {$g$};
\node[below] at (h) {$h$};
\draw[-latex',thick] (1) to (f);
\draw[-latex',thick] (3) to (h);
\draw[-latex',densely dotted,thick] (2) to (g);
\draw[-latex',thick,densely dotted] (1) to[out=-20,in=200] (2);
\draw[-latex',thick] (2) to[out=-20,in=200] (3);
\draw[-latex',thick] (4) to (1);
\draw[-latex',thick,densely dotted] (4) to (g);
\draw[-latex',thick,densely dotted] (3) to[out=160,in=20] (2);
\end{tikzpicture}
=\sum\;
\poisson^{i_1j_1}\;
\partial_{i_1}\poisson^{i_2j_2}\;
\partial_{j_2j_3}\poisson^{i_3j_3}\;
\partial_{i_3}\poisson^{i_4j_4}\;
\partial_{i_2}f\;
\partial_{j_1j_4}g\;
\partial_{i_4}h,
\end{equation}
where the sum runs over ${1\leq i_1,j_1, i_2,j_2, i_3,j_3, i_4,j_4\leq d}$.
Notice that we have implicitly labeled the vertices of type~1 in the sum, but
any other labeling leads to the same cochain.

Kontsevich formula associates
to a Poisson bivector $\poisson$ on $\R^d$ an associative deformation of $(A,m_0)$ 
\begin{equation}\label{konsi}
\konsi_\hbar = m_0 + \sum_{n\geq1}\hbar^n k_n^\poisson(f,g),
\quad \text{where}\quad k_n^\poisson(f,g)=\sum_{\Gamma\in \K_{n,2}} w(\Gamma) \ B^\poisson_\Gamma,
\end{equation}
and where $w(\Gamma)\in\R$ is the Kontsevich's weight of the graph $\Gamma$,
it is independent of the Poisson bivector and the dimension $d$.

\medskip
\noindent{\bf Convention.} Sometimes it is immaterial to distinguish between
the first and second argument of the Poisson bracket. In such situations we shall
simply represent the Poisson bracket by using solid lines for both edges, i.e., 
\begin{tikzpicture}[baseline=5pt,scale=.6]
\node[type1] (f) at (0,0)  {};
\node[type1] (g) at (1,0)  {};
\node[type2] (1) at (.5,1)  {};
\draw[-latex',thick] (1) to (f);
\draw[-latex',thick] (1) to (g);
\end{tikzpicture}.
Moreover, when it is not essential to  fully draw a graph, but just keep  few vertices
and edges relevant to our discussion, we shall use a ``grey zone'' to indicate that
not all the vertices or edges in that zone are drawn.  
This is illustrated by
\begin{center}
\begin{tikzpicture}[baseline=12pt,scale=1]
\node[type1] (f) at (0,0)  {};
\node[type1] (g) at (1,0)  {};
\node[type1] (h) at (2,0)  {};
\node[type2] (1) at (0.25,1)  {};
\node[type2] (2) at (1,1)  {};
\node[type2] (3) at (1.75,1) {};
\node[type2] (4) at (.7,.6) {};
\node[type2] (5) at (.6,1.4) {};
\draw[-latex',thick] (5) to (1);
\draw[-latex',thick,densely dotted] (5) to (2);
\draw[-latex',thick] (1) to (f);
\draw[-latex',thick] (3) to (h);
\draw[-latex',densely dotted,thick] (2) to (g);
\draw[-latex',thick,densely dotted] (1) to[out=-20,in=200] (2);
\draw[-latex',thick] (2) to[out=-20,in=200] (3);
\draw[-latex',thick] (4) to (1);
\draw[-latex',thick,densely dotted] (4) to (g);
\draw[-latex',thick,densely dotted] (3) to[out=160,in=20] (2);
\node[type1] (k1) at (3,.8)  {};
\node[type1] (k2) at (3.75,.8)  {};
\draw[-latex',very thick] (k1) to (k2);
\end{tikzpicture}
\hskip5mm
\begin{tikzpicture}[baseline=12pt,scale=1]
\fill[fill=black!20] (1,1) ellipse [x radius=1.2cm,y radius=.25cm];
\node[type1] (f) at (0,0)  {};
\node[type1] (g) at (1,0)  {};
\node[type1] (h) at (2,0)  {};
\node[type2] (1) at (0.25,1)  {};
\node[type2] (2) at (1,1)  {};
\node[type2] (3) at (1.75,1) {};
\node[type2] (4) at (.7,.6) {};
\draw[-latex',thick] (1) to (f);
\draw[-latex',thick] (3) to (h);
\draw[-latex',thick] (2) to (g);
\draw[-latex',thick] (4) to (1);
\draw[-latex',thick] (4) to (g);
\end{tikzpicture}
\end{center}

\section{Universal 2-cocycles} \label{sec:universal}
Recall that a directed cycle in a simple directed graph
consists of a subgraph  for which
each vertex is of indegree and outdegree~1.
We say that a simple directed graph is with wheels
if it has a directed cycle as subgraph. An elementary result from
graph theory says that a simple directed graph such that each vertex
is of outdegree at least~1 must have a wheel. In the rest of this paper, we set $X=\R^d$
and $A=C^\infty(\R^d)$.
 
\begin{definition} A cochain $C\in \Cc^m(A,A)$, $m\geq1$, is called $\poisson$-universal
if it is a finite $\R$-linear combination of polydifferential operators associated to graphs
in $\K_{n,m}$, i.e., $C$ is a finite sum of the form
\begin{equation}\label{eq:diff}
C = \sum_{n\geq1}\ \sum_{\Gamma\in \K_{n,m}} a^{}_\Gamma \ B^\poisson_\Gamma,
\quad a^{}_\Gamma\in\R.
\end{equation}
A cochain $C\in \Cc^m(A,A)$, $m\geq1$, 
is called $\poisson$-universal without wheels
if all the graphs $\Gamma$ appearing in (\ref{eq:diff}) have no wheels.
\end{definition}

\begin{example}
The 2-cochain $C_1(f,g)=$
\begin{tikzpicture}[baseline=0pt,scale=.8]
\node[type1] (f) at (0,0)  {};
\node[type1] (g) at (1.5,0)  {};
\node[type2] (1) at (0,1)  {};
\node[type2] (2) at (.75,1)  {};
\node[type2] (3) at (1.5,1) {};
\node[below] at (f) {$f$};
\node[below] at (g) {$g$};
\draw[-latex',thick] (1) to (f);
\draw[-latex',densely dotted,thick] (1) to (g);
\draw[-latex',thick] (2) to[out=200,in=-20] (1);
\draw[-latex',thick,densely dotted] (2) to[out=-20,in=200] (3);
\draw[-latex',thick] (3) to[out=150,in=30] (1);
\draw[-latex',thick,densely dotted] (3) to (g);
\end{tikzpicture}
is $\poisson$-universal without wheels, while
$C_2(f,g)=$
\begin{tikzpicture}[baseline=0pt,scale=.8]
\node[type1] (f) at (0,0)  {};
\node[type1] (g) at (1.5,0)  {};
\node[type2] (1) at (0,1)  {};
\node[type2] (2) at (.75,1)  {};
\node[type2] (3) at (1.5,1) {};
\node[below] at (f) {$f$};
\node[below] at (g) {$g$};
\draw[-latex',thick] (1) to (f);
\draw[-latex',densely dotted,thick] (2) to (g);
\draw[-latex',thick,densely dotted] (1) to[out=-20,in=200] (2);
\draw[-latex',thick] (2) to[out=-20,in=200] (3);
\draw[-latex',thick] (3) to[out=150,in=30] (1);
\draw[-latex',thick,densely dotted] (3) to (g);
\end{tikzpicture}
is $\poisson$-universal with wheels.
\end{example}

\begin{definition}
A universal quantization formula is a correspondence  that associates
a star-product to a Poisson bivector
$$
\poisson\mapsto c_\hbar^\poisson = m_0 + \sum_{k\geq1} \hbar^k c_k^\poisson,
$$
where the $c_k^\poisson$ are $\poisson$-universal 2-cochains.
\end{definition}
To keep the notation simple, we will omit the reference to $\poisson$ in
$c_k^\poisson$ and write $c_k$, etc.
\begin{example}
Kontsevich formula~(\ref{konsi}) for a star-product is an example of 
universal quantization formula \textit{with wheels}. Wheels are already appearing
at order 2 in $\hbar$, the 2nd term being explicitly given by~\cite{Dito}
$$
k_2(f,g)=\frac{1}{2}\ 
\begin{tikzpicture}[baseline=12pt]
\node[type1] (f) at (0,0)  {};
\node[type1] (g) at (1,0)  {};
\node[type2] (1) at (0,1)  {};
\node[type2] (2) at (1,1)  {};
\node[below] at (f) {$f$};
\node[below] at (g) {$g$};
\draw[-latex',thick] (1) to (f);
\draw[-latex',densely dotted,thick] (1) to (g);
\draw[-latex',thick] (2) to (f);
\draw[-latex',thick,densely dotted] (2) to (g);
\end{tikzpicture}
+ \frac{1}{3}\ 
\begin{tikzpicture}[baseline=12pt]
\node[type1] (f) at (0,0)  {};
\node[type1] (g) at (1,0)  {};
\node[type2] (1) at (0,1)  {};
\node[type2] (2) at (1,1)  {};
\node[below] at (f) {$f$};
\node[below] at (g) {$g$};
\draw[-latex',thick] (1) to (f);
\draw[-latex',densely dotted,thick] (1) to (2);
\draw[-latex',thick] (2) to (f);
\draw[-latex',thick,densely dotted] (2) to (g);
\end{tikzpicture}
+ \frac{1}{3}\ 
\begin{tikzpicture}[baseline=12pt]
\node[type1] (f) at (0,0)  {};
\node[type1] (g) at (1,0)  {};
\node[type2] (1) at (0,1)  {};
\node[type2] (2) at (1,1)  {};
\node[below] at (f) {$f$};
\node[below] at (g) {$g$};
\draw[-latex',thick] (1) to (f);
\draw[-latex',densely dotted,thick] (1) to (g);
\draw[-latex',thick] (2) to (1);
\draw[-latex',thick,densely dotted] (2) to (g);
\end{tikzpicture}
- \frac{1}{6}\ 
\begin{tikzpicture}[baseline=12pt]
\node[type1] (f) at (0,0)  {};
\node[type1] (g) at (1,0)  {};
\node[type2] (1) at (0,1)  {};
\node[type2] (2) at (1,1)  {};
\node[below] at (f) {$f$};
\node[below] at (g) {$g$};
\draw[-latex',thick] (1) to (f);
\draw[-latex',densely dotted,thick] (1) to[out=-20,in=200] (2);
\draw[-latex',thick] (2) to[out=160,in=20] (1);
\draw[-latex',thick,densely dotted] (2) to (g);
\end{tikzpicture}
$$
\end{example}

\begin{example}\label{ex:pv}
In~\cite{PV2000}, Penkava and Vanhaecke have constructed a universal
quantization formula without wheels associative {up to order 3}, denoted here by
$m_\hbar = m_0 + \hbar m_1 + \hbar^2 m_2 + \hbar^3 m_3$, where
\begin{align*}
&\mone \\
& \\
&\mtwo \\
& \\
&\mthree
\end{align*}
For a general Poisson bivector, it was shown in~\cite{PV2000} that $m_\hbar$ cannot 
be extended to a 4th order associative deformation. 
\end{example}

\begin{lemma}\label{lem:wheels}
Let $\Gamma$ be a graph in $\K_{n,1}$ with $n\geq2$, then $\Gamma$ has at least a wheel.
\end{lemma}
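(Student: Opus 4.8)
The plan is to reduce the statement to the elementary graph-theoretic fact, already recalled in the text, that any simple directed graph in which every vertex has outdegree at least $1$ must contain a wheel. The obstacle to applying this fact directly to $\Gamma$ is condition~(3) of Definition~\ref{def:k}: the type-2 vertices have outdegree $0$. In $\K_{n,1}$ there is exactly one such vertex, so the idea is to excise it and work with the type-1 vertices alone.

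Concretely, I would let $\Gamma'$ be the subgraph of $\Gamma$ induced on the set $V_\Gamma^1$ of type-1 vertices; that is, delete the unique type-2 vertex together with every edge incident to it. Since a subgraph of a simple directed graph is again simple, $\Gamma'$ is a simple directed graph, and all of its vertices are of type~1.

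The key step is the claim that every vertex of $\Gamma'$ has outdegree at least $1$. Each type-1 vertex $v$ has outdegree exactly $2$ in $\Gamma$ by condition~(2). Because $\Gamma$ has no multiple edges, $v$ sends at most one edge to the type-2 vertex; hence at least one of the two edges leaving $v$ terminates at a type-1 vertex and therefore survives in $\Gamma'$. This is also where the hypothesis $n\geq 2$ enters: for $n=1$ a single type-1 vertex cannot attain outdegree $2$ without a loop or a double edge, so $\K_{1,1}=\emptyset$ and there is nothing to prove; for $n\geq 2$ the subgraph $\Gamma'$ is nonempty and the outdegree bound is meaningful.

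With the claim in hand, $\Gamma'$ is a nonempty simple directed graph all of whose vertices have outdegree at least $1$, so the cited result produces a wheel in $\Gamma'$. As $\Gamma'$ is a subgraph of $\Gamma$, this wheel is also a subgraph of $\Gamma$, which is exactly the assertion. I do not expect any genuine difficulty here beyond correctly accounting for the edges removed when deleting the type-2 vertex; the simplicity hypothesis is precisely what keeps each type-1 outdegree from dropping below~$1$.
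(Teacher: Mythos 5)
Your proof is correct and takes essentially the same route as the paper: both delete the unique type-2 vertex $f$ together with its incident edges, observe that simplicity guarantees every surviving type-1 vertex still has outdegree at least~$1$, and then invoke the recalled fact that a nonempty simple directed graph with all outdegrees at least~$1$ contains a wheel. Your explicit accounting of why each type-1 vertex loses at most one of its two outgoing edges is a detail the paper states more tersely (``edges ending at $f$ must originate from different vertices''), but the argument is the same.
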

\begin{proof}
Let us call $f$ the vertex of type~2 in $\Gamma$. Since $\Gamma$ is a simple graph, 
edges ending at $f$ must originate from different vertices of type~1.
By removing the vertex $f$ and all the edges
ending at $f$ we get a simple directed graph $\Gamma'$. This truncation is pictorially represented as:
\begin{center}
\begin{tikzpicture}[baseline=12pt]
\fill[black!20] (.75,1.2) ellipse [x radius=1.5cm,y radius=.75cm];
\node[type1] (gam1) at (-1.85,1.2)  {};
\node[right] at (gam1) {$\Gamma=$};
\node[type1] (f) at (.75,0)  {};
\node[type2] (1) at (0,1)  {};
\node[type2] (2) at (.5,1)  {};
\node[type2] (3) at (1,1)  {};
\node[type2] (4) at (1.5,1)  {};
\node[type2] (f1) at (.15,1.5)  {};
\node[type2] (f2) at (.35,1.5)  {};
\node[type2] (f22) at (.55,1.5)  {};
\node[type2] (f3) at (1.1,1.5)  {};
\node[type2] (f4) at (1.75,1.5)  {};
\node[below] at (f) {$f$};
\draw[-latex',thick] (1) to (f);
\draw[-latex',thick] (1) to (2);
\draw[-latex',thick] (2) to (f);
\draw[-latex',thick] (3) to (f);
\draw[-latex',thick] (4) to (f);
\draw[-latex',thick] (f1) to (1);
\draw[-latex',thick] (2) to (f2);
\draw[-latex',thick] (4) to (f4);
\draw[-latex',thick] (3) to (f3);
\draw[-latex',thick] (f22) to (3);
\node[type1] (k1) at (3,1.2)  {};
\node[type1] (k2) at (3.75,1.2)  {};
\draw[-latex',very thick] (k1) to (k2);
\end{tikzpicture}
\hskip5mm
\begin{tikzpicture}[baseline=12pt]
\node[type1] (gam1) at (-1.85,1.2)  {};
\node[right] at (gam1) {$\Gamma'=$};
\fill[fill=black!20] (.75,1.2) ellipse [x radius=1.5cm,y radius=.75cm];
\node[type1] (f) at (.75,0)  {};
\node[type2] (1) at (0,1)  {};
\node[type2] (2) at (.5,1)  {};
\node[type2] (3) at (1,1)  {};
\node[type2] (4) at (1.5,1)  {};
\node[type2] (f1) at (.15,1.5)  {};
\node[type2] (f2) at (.35,1.5)  {};
\node[type2] (f22) at (.55,1.5)  {};
\node[type2] (f3) at (1.1,1.5)  {};
\node[type2] (f4) at (1.75,1.5)  {};
\draw[-latex',thick] (1) to (2);
\draw[-latex',thick] (f1) to (1);
\draw[-latex',thick] (2) to (f2);
\draw[-latex',thick] (4) to (f4);
\draw[-latex',thick] (3) to (f3);
\draw[-latex',thick] (f22) to (3);
\end{tikzpicture}
\end{center}
Each vertex in $\Gamma'$ is
of outdegree at least~1, hence it contains a wheel.
\end{proof}
We have excluded the case $n=1$ in this Lemma since  $\K_{1,1}$ is empty.

\begin{lemma}\label{lem:tensor}
Let $\Lambda\in \Gammab(\wedge^{2} TX)\subset \Cc^2(A,A)$ be a bivector. 
If $\Lambda$ is $\poisson$-universal without wheels, 
then $\Lambda=a\, \poisson$ for some $a\in\R$.
\end{lemma}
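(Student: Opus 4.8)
The plan is to use that a bivector is exactly an antisymmetric bidifferential operator of order one in each of its two arguments, and to combine this with a graph-theoretic argument in the spirit of Lemma~\ref{lem:wheels}.

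First I would sort the terms of the assumed expansion $\Lambda=\sum_\Gamma a_\Gamma B^\poisson_\Gamma$ (over wheel-free graphs) by the bi-order of the resulting operator. For $\Gamma\in\K_{n,2}$ the cochain $B^\poisson_\Gamma$ differentiates $f$ (resp.\ $g$) exactly $\mathrm{indeg}(f)$ (resp.\ $\mathrm{indeg}(g)$) times, since each edge arriving at a type-2 vertex contributes one derivative there; hence $B^\poisson_\Gamma$ is bi-homogeneous of bi-order $(\mathrm{indeg}(f),\mathrm{indeg}(g))$. As bidifferential operators, the subspaces of fixed bi-order $(p,q)$ are in direct sum, because the coefficient of a monomial $\partial^\alpha f\,\partial^\beta g$ only receives contributions from graphs with $|\alpha|=\mathrm{indeg}(f)$ and $|\beta|=\mathrm{indeg}(g)$. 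Since $\Lambda\in\Gammab(\wedge^2TX)$ has bi-order $(1,1)$, its $(1,1)$-component is all of $\Lambda$ and every other component vanishes; thus I may assume that every graph actually contributing satisfies $\mathrm{indeg}(f)=\mathrm{indeg}(g)=1$.

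The heart of the argument is to rule out such graphs with $n\geq2$. Let $\Gamma_1$ be the subgraph induced on the type-1 vertices; it is wheel-free, so by the elementary fact recalled before Lemma~\ref{lem:wheels} it cannot have all outdegrees $\geq1$, i.e.\ some type-1 vertex has outdegree $0$ in $\Gamma_1$. Both outgoing edges of such a vertex then land on type-2 vertices, forcing one on $f$ and one on $g$ (no multiple edges); as $\mathrm{indeg}(f)=\mathrm{indeg}(g)=1$, these are the only edges reaching $f,g$, so this vertex $z$ is the unique outdegree-$0$ vertex of $\Gamma_1$ and every other type-1 vertex sends both its edges inside $\Gamma_1$. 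Deleting $z$ leaves a graph on the remaining $n-1\geq1$ type-1 vertices in which every vertex still has outdegree $\geq1$, since each had two internal edges and at most one of them could point to $z$; by the same elementary fact this graph, hence $\Gamma$ itself, contains a wheel, contradicting wheel-freeness. (If instead the edges to $f$ and $g$ came from two distinct vertices, then already every vertex of $\Gamma_1$ has outdegree $\geq1$ and one gets a wheel at once.)

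Consequently the only surviving graphs are in $\K_{1,2}$, whose single type-1 vertex must send its two edges to $f$ and $g$, giving $B^\poisson_\Gamma=\pm\,\poisson^{ij}\partial_i f\,\partial_j g$; summing the coefficients yields $\Lambda=a\,\poisson$ with $a\in\R$. The bi-order bookkeeping of the first step is routine, and I expect the genuine obstacle to be the deletion step: one has to check carefully that removing the sink $z$ does not create a new vertex of outdegree $0$, which is precisely where the absence of multiple edges and the hypothesis $n\geq2$ enter.
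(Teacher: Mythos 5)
Your proof is correct and takes essentially the same approach as the paper: reduce to graphs whose two type-2 vertices have indegree~1, then delete the type-2 vertices and the type-1 sink feeding them and invoke the outdegree-$\geq1$ criterion to force a wheel whenever $n\geq2$, leaving only $\K_{1,2}$ and hence $\Lambda=a\,\poisson$. The only difference is presentational: your bi-order decomposition carefully justifies the indegree-1 reduction that the paper simply asserts, and your single sink-deletion step merges the paper's cases a) and b) (the latter handled there via Lemma~\ref{lem:wheels}), while your parenthetical remark is the paper's case c).
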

\begin{proof}
Suppose $\Lambda$ is $\poisson$-universal, then it can be represented as a finite sum
$\sum_{n\geq1}\ \sum_{\Gamma\in \K_{n,2}} a^{}_\Gamma \ B^\poisson_\Gamma$.
Each graph $\Gamma$ appearing in this sum has two vertices of type~2
each of them having indegree~1.
We distinguish several cases.

1) If $\Gamma\in \K_{1,2}$, then clearly $B^\poisson_\Gamma = \pm \poisson$.

2) If $\Gamma\in \K_{n,2}$ with $n\geq 2$, then there are three possible subcases
to which we apply a similar truncation as the one used in Lemma~\ref{lem:wheels}:

\medskip
\begin{tikzpicture}[baseline=12pt]
\fill[black!20] (.75,1.2) ellipse [x radius=1.5cm,y radius=.75cm];
\node[type1] (gam1) at (-3.7,1.2)  {};
\node[right] at (gam1) {a)\hskip10mm$\Gamma=$};
\node[type1] (f) at (-1.3,0.25)  {};
\node[type1] (g) at (-.7,0.25)  {};
\node[type2] (1) at (-1,1.2)  {};
\draw[-latex',thick] (1) to (f);
\draw[-latex',thick] (1) to (g);
\node[type1] (k1) at (3,1.2)  {};
\node[type1] (k2) at (3.75,1.2)  {};
\draw[-latex',very thick] (k1) to (k2);
\end{tikzpicture}
\hskip5mm
\begin{tikzpicture}[baseline=12pt]
\node[type1] (gam1) at (-1.85,1.2)  {};
\node[right] at (gam1) {$\Gamma'=$};
\fill[black!20] (.75,1.2) ellipse [x radius=1.5cm,y radius=.75cm];
\end{tikzpicture}

\medskip
\begin{tikzpicture}[baseline=12pt]
\fill[black!20] (.75,1.2) ellipse [x radius=1.5cm,y radius=.75cm];
\node[type1] (gam1) at (-3.7,1.2)  {};
\node[right] at (gam1) {b)\hskip10mm$\Gamma=$};
\node[type1] (f) at (-1,0.25)  {};
\node[type1] (g) at (0,0.25)  {};
\node[type2] (1) at (-.5,1.2)  {};
\draw[-latex',thick] (1) to (f);
\draw[-latex',thick] (1) to (g);
\node[type2] (2) at (0,1.5)  {};
\node[type2] (3) at (.1,1)  {};
\node[type2] (4) at (.2,.65)  {};
\draw[-latex',thick] (2) to (1);
\draw[-latex',thick] (3) to (1);
\draw[-latex',thick] (4) to (1);
\node[type1] (k1) at (3,1.2)  {};
\node[type1] (k2) at (3.75,1.2)  {};
\draw[-latex',very thick] (k1) to (k2);
\end{tikzpicture}
\hskip5mm
\begin{tikzpicture}[baseline=12pt]
\node[type1] (gam1) at (-1.85,1.2)  {};
\node[right] at (gam1) {$\Gamma'=$};
\fill[fill=black!20] (.75,1.2) ellipse [x radius=1.5cm,y radius=.75cm];
\fill[white] (-.5,1.2) circle (.45);
\node[type1] (1) at (-.5,1.2)  {};
\node[type2] (2) at (0,1.5)  {};
\node[type2] (3) at (.1,1)  {};
\node[type2] (4) at (.2,.65)  {};
\draw[-latex',thick] (2) to (1);
\draw[-latex',thick] (3) to (1);
\draw[-latex',thick] (4) to (1);
\end{tikzpicture}

\smallskip
\begin{tikzpicture}[baseline=12pt]
\fill[black!20] (.75,1.2) ellipse [x radius=1.5cm,y radius=.75cm];
\node[type1] (gam1) at (-3.7,1.2)  {};
\node[right] at (gam1) {c)\hskip10mm$\Gamma=$};
\node[type1] (f) at (-1,0.25)  {};
\node[type1] (g) at (0,0.25)  {};
\node[type2] (1) at (-.5,1.2)  {};
\node[type2] (11) at (0,.8)  {};
\draw[-latex',thick] (1) to (f);
\draw[-latex',thick] (11) to (g);
\node[type2] (2) at (0,1.65)  {};
\node[type2] (3) at (.1,1.4)  {};
\node[type2] (4) at (.35,.65)  {};
\node[type2] (5) at (.4,1.05)  {};
\node[type2] (6) at (.55,1.5)  {};
\draw[-latex',thick] (2) to (1);
\draw[-latex',thick] (3) to (1);
\draw[-latex',thick] (4) to (11);
\draw[-latex',thick] (5) to (11);
\draw[-latex',thick] (6) to (11);
\draw[-latex',thick] (11) to (3);
\draw[-latex',thick] (1) to (11);
\node[type1] (k1) at (3,1.2)  {};
\node[type1] (k2) at (3.75,1.2)  {};
\draw[-latex',very thick] (k1) to (k2);
\end{tikzpicture}
\hskip5mm
\begin{tikzpicture}[baseline=12pt]
\node[type1] (gam1) at (-1.85,1.2)  {};
\node[right] at (gam1) {$\Gamma'=$};
\fill[fill=black!20] (.75,1.2) ellipse [x radius=1.5cm,y radius=.75cm];
\node[type2] (1) at (-.5,1.2)  {};
\node[type2] (11) at (0,.8)  {};
\node[type2] (2) at (0,1.65)  {};
\node[type2] (3) at (.1,1.4)  {};
\node[type2] (4) at (.35,.65)  {};
\node[type2] (5) at (.4,1.05)  {};
\node[type2] (6) at (.55,1.5)  {};
\draw[-latex',thick] (2) to (1);
\draw[-latex',thick] (3) to (1);
\draw[-latex',thick] (4) to (11);
\draw[-latex',thick] (5) to (11);
\draw[-latex',thick] (6) to (11);
\draw[-latex',thick] (11) to (3);
\draw[-latex',thick] (1) to (11);
\node[type1] (k1) at (3,1.2)  {};
\node[type1] (k2) at (3.75,1.2)  {};
\end{tikzpicture}

\smallskip
For each subcase, when it occurs, the graph $\Gamma'$ contains at least a wheel. Indeed
for a), each vertex of~$\Gamma'$ has outdegree~2; for b), it follows 
Lemma~\ref{lem:wheels}; for c) each vertex of $\Gamma'$ has 
outdegree at least~1.

In conclusion, a bivector which is $\poisson$-universal without wheels can only
be proportional to the Poisson bivector.
\end{proof}

\begin{lemma}\label{lem:coboundary}
Let $T\in \Cc^1(A,A)$ be such
that its coboundary $\delta T$ is 
$\poisson$-universal without wheels. Then $\delta T=0$.
\end{lemma}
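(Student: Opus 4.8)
The plan is to first make the coboundary completely explicit and read off its two defining features — that it is symmetric and that its principal symbol has a very special ``additive'' shape — and then to confront these with the structure forced by wheel-freeness.

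Writing $T(f)=\sum_{|\alpha|\ge 1}T^\alpha\,\partial_\alpha f$ with finitely many nonzero smooth coefficients $T^\alpha$, and using $\delta T(f,g)=f\,T(g)-T(fg)+T(f)\,g$ together with the Leibniz rule, the pure terms $f\,T(g)$ and $T(f)\,g$ cancel against the extremal terms of $T(fg)$ and one is left with
\[
\delta T(f,g)=-\sum_{\alpha}T^\alpha\sum_{0<\beta<\alpha}\binom{\alpha}{\beta}\,\partial_\beta f\,\partial_{\alpha-\beta}g .
\]
Since $\beta\mapsto\alpha-\beta$ is an involution of $\{0<\beta<\alpha\}$ and $\binom{\alpha}{\beta}=\binom{\alpha}{\alpha-\beta}$, this shows that $\delta T$ is symmetric, $\delta T(f,g)=\delta T(g,f)$, and that both arguments are genuinely differentiated. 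Freezing coefficients and replacing $f,g$ by exponentials, the symbol of $\delta T$ in the arguments $(f,g)$ is $P(\xi)+P(\eta)-P(\xi+\eta)$, where $P$ is the symbol of $T$ and $P(0)=0$; in particular $\delta T=0$ is equivalent to $T$ being of order $\le 1$, i.e.\ a vector field.

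Assume, for contradiction, that $\delta T\ne 0$, and let $s\ge 2$ be the order of $T$. I would isolate the part of $\delta T$ of top total order $s$ in $(f,g)$: its symbol is $-\big(P_s(\xi+\eta)-P_s(\xi)-P_s(\eta)\big)$ with $P_s$ the nonzero principal symbol of $T$, so it is a nonzero symmetric polynomial of coboundary type. On the other hand, this same top-order part must be produced by the wheel-free graphs in the expansion $\delta T=\sum a_\Gamma B^\poisson_\Gamma$, namely by those graphs carrying the largest number of edges into the two type-2 vertices, equivalently the fewest derivatives on the Poisson factors. Here wheel-freeness enters decisively: the type-1 vertices of such a graph form a directed acyclic subgraph, so there is always a source vertex whose Poisson factor $\poisson^{ij}$ is undifferentiated and whose two outgoing edges contribute the antisymmetric pairing $\poisson^{ij}\xi_i\eta_j=\poisson(\xi,\eta)=-\poisson(\eta,\xi)$. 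Grading by the number of Poisson factors (the identity being universal in $\poisson$), the contribution of smallest Poisson-degree to the top-order part is assembled entirely from such antisymmetric pairings.

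The heart of the matter — and the step I expect to be the main obstacle — is to show that a nonzero combination built from these antisymmetric pairings cannot have coboundary-type symbol, so that matching it against the symmetric $-\big(P_s(\xi+\eta)-P_s(\xi)-P_s(\eta)\big)$ forces its coefficient to vanish. The prototype is exactly Lemma~\ref{lem:tensor}: in bidegree $(1,1)$ the coboundary symbol is symmetric in its two indices, while the only wheel-free universal possibility is proportional to the antisymmetric $\poisson$, whence it is zero. The difficulty in the general case is to propagate this incompatibility through products (and isolated derivatives) of Poisson factors at higher order, peeling off one source vertex at a time by the truncation device already used in Lemmas~\ref{lem:wheels} and~\ref{lem:tensor}. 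Once the top-order part is shown to vanish we get $P_s=0$, contradicting $s\ge 2$; hence $T$ is a vector field and $\delta T=0$.
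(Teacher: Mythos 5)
There is a genuine gap, and you have flagged it yourself: the ``heart of the matter'' --- that a nonzero wheel-free universal 2-cochain cannot have a symbol of coboundary type $P(\xi+\eta)-P(\xi)-P(\eta)$ --- is left unproved, and it is essentially equivalent to the lemma you are trying to establish (a coboundary $\delta T$ that is $\poisson$-universal without wheels must vanish), so the argument as written is a reduction of the statement to itself. Worse, the mechanism you sketch for closing this gap does not work. First, a source vertex of the acyclic subgraph on type-1 vertices carries an undifferentiated $\poisson^{ij}$, but its two outgoing edges need not land on the type-2 vertices; they may differentiate other Poisson factors, so the contribution is not in general the pairing $\poisson^{ij}\xi_i\eta_j$. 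Second, and more fundamentally, the symmetric-versus-antisymmetric incompatibility that powers Lemma~\ref{lem:tensor} does not propagate to higher Poisson-degree: products of antisymmetric pairings are perfectly capable of being symmetric. For instance the wheel-free graph with two type-1 vertices both sending their edges to $f$ and $g$ gives $\sum\poisson^{ij}\poisson^{kl}\,\partial_{ik}f\,\partial_{jl}g$, whose frozen-coefficient symbol $\big(\poisson(\xi,\eta)\big)^2$ is symmetric in $(\xi,\eta)$. Ruling out that such symmetric wheel-free expressions are of coboundary type requires a genuinely new argument (for the example above one can specialize $\eta=\xi$ and compare homogeneities, but doing this uniformly, with derivatives of $\poisson$ present and cancellations allowed across graphs of different $n$, is precisely the unproven combinatorial core). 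Your preliminary computation of $\delta T$ and the equivalence ``$\delta T=0$ iff $T$ is a vector field'' are correct, but they are the easy part.

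The paper's proof avoids symbol analysis entirely and is much shorter: since $\delta T$ is a Hochschild coboundary, a homotopy formula (an explicit one is in~\cite{OMY}) produces from $\delta T$ a 1-cochain $T'$ with $\delta T'=\delta T$; because the homotopy is applied to a $\poisson$-universal expression without wheels, $T'$ is itself $\poisson$-universal without wheels. But a universal 1-cochain is a combination of graphs in $\K_{n,1}$, and $\K_{1,1}$ is empty while every graph in $\K_{n,1}$ with $n\geq2$ contains a wheel by Lemma~\ref{lem:wheels}; hence $T'=0$ and $\delta T=0$. The decisive idea you are missing is this transfer step: replace the \emph{arbitrary} 1-cochain $T$ by a \emph{universal} one via the homotopy, so that the graph-theoretic Lemma~\ref{lem:wheels} applies directly in degree~1, where wheel-freeness is immediately fatal --- rather than fighting the much subtler question of which bidifferential symbols wheel-free graphs can produce in degree~2.
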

\begin{proof}
By getting rid of an irrelevant derivation in $T$, one can consider that
$T$ is a $\poisson$-universal 1-cochain. Indeed,
one can use a homotopy formula (see e.g.~\cite{OMY} for 
an explicit formula for 2-cocycles) to define a $\poisson$-universal 1-cochain
$T'$ such that $\delta T' = \delta T$. Since $\delta T$ is without wheels so will
be $T'$, but Lemma~\ref{lem:wheels} tells us that if $T'$ is not vanishing, 
then it must contain a wheel. Therefore $T'=0$ and $\delta T=0$.
\end{proof}

By the HKR theorem, any cocycle $C\in \Cc^m(A,A)$ is a sum $C=\Lambda +\delta C'$,
where $\Lambda\in \Gammab(\wedge^{m} TX)$ and $C'\in \Cc^{m-1}(A,A)$. 
The $m$-vector $\Lambda$ being the antisymmetric part of $C$ then, if $C$ is $\poisson$-universal without wheels, so are $\Lambda$ and $\delta C'$.
As an immediate 
consequence of Lemmas~\ref{lem:tensor} and~\ref{lem:coboundary} we have:
\begin{lemma}\label{cor:cocycle}
Let $C\in \Cc^2(A,A)$ be a Hochschild 2-cocycle which
is $\poisson$-universal without wheels, then $C=a\,\poisson$
for some $a\in\R$.
\end{lemma}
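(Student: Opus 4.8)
The plan is to combine the two preceding lemmas with the HKR decomposition recalled just above the statement. Since $C$ is a Hochschild 2-cocycle, the smooth HKR theorem lets me write $C = \Lambda + \delta C'$ with $\Lambda\in\Gammab(\wedge^2 TX)$ the antisymmetric part of $C$ and $C'\in\Cc^1(A,A)$. First I would check that antisymmetrization keeps us inside the class of graph-cochains: replacing $C(f,g)$ by $\tfrac12\big(C(f,g)-C(g,f)\big)$ amounts to adding, for each graph $\Gamma\in\K_{n,2}$ occurring in $C$, the graph obtained by interchanging the labels of the two type-2 vertices, which again lies in $\K_{n,2}$. Crucially, type-2 vertices have outdegree~$0$ and so cannot sit on any directed cycle; hence swapping their labels cannot create a wheel. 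Therefore, if $C$ is $\poisson$-universal without wheels, so is its antisymmetric part $\Lambda$, and consequently so is the difference $\delta C' = C - \Lambda$.

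With this decomposition in hand the two summands are handled separately by the earlier results. Applying Lemma~\ref{lem:tensor} to the bivector $\Lambda$, now known to be $\poisson$-universal without wheels, yields $\Lambda = a\,\poisson$ for some $a\in\R$. Applying Lemma~\ref{lem:coboundary} to $C'$, whose coboundary $\delta C'$ is $\poisson$-universal without wheels, yields $\delta C' = 0$. Adding the two contributions gives $C = a\,\poisson$, which is exactly the assertion.

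The step carrying the genuine content is not the final algebra but the claim that the HKR splitting $C=\Lambda+\delta C'$ can be arranged so that \emph{both} pieces stay $\poisson$-universal without wheels; the remainder is a one-line invocation of the two lemmas. The only thing to verify with care is that passing to the antisymmetric part neither leaves the span of the $B^\poisson_\Gamma$ nor introduces an oriented cycle, and both facts follow from the outdegree-$0$ property of type-2 vertices noted above. Once that is granted, the statement is indeed an immediate corollary of Lemmas~\ref{lem:tensor} and~\ref{lem:coboundary}.
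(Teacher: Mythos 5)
Your proof is correct and follows essentially the same route as the paper: the HKR splitting $C=\Lambda+\delta C'$ with $\Lambda$ the antisymmetric part of $C$, the observation that both pieces remain $\poisson$-universal without wheels, and then Lemma~\ref{lem:tensor} applied to $\Lambda$ and Lemma~\ref{lem:coboundary} applied to $C'$. In fact you spell out (via the outdegree-$0$ property of type-2 vertices) the antisymmetrization step that the paper merely asserts, which is a welcome addition rather than a deviation.
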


Notice that the Hochschild coboundary of a  $\poisson$-universal cochain is
$\poisson$-universal and similarly for $\poisson$-universal cochains without wheels.
Hence one can consider the subcomplex of the Hochschild complex consisting of
$\poisson$-universal cochains without wheels.  In essence, Lemma~\ref{cor:cocycle}
states that the second cohomology space for this subcomplex is one-dimensional
and is generated by $\poisson$.

\section{The necessity of wheels} \label{sec:result}

Here we establish the main result of this paper by showing
that the terms up to order 3 in $\hbar$ of 
any universal quantization formula without wheels are, up to a change of the deformation parameter,
given by the $m_i$'s in Example~\ref{ex:pv} and hence get a contradiction.

\begin{theorem}\label{th:nogo}
Any universal quantization formula involves graphs
with wheels.
\end{theorem}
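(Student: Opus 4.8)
The plan is to argue by contradiction. Assume there is a universal quantization formula $c_\hbar = m_0 + \sum_{k\geq1}\hbar^k c_k$ all of whose terms $c_k$ are $\poisson$-universal without wheels, and show that its truncation to order $3$ is forced---after a formal change of the deformation parameter---to coincide with the Penkava--Vanhaecke cochains $m_1,m_2,m_3$ of Example~\ref{ex:pv}. Since $c_\hbar$ is a genuine star-product it is associative to all orders, so in particular $m_\hbar$ would then extend to a $4$th order associative deformation without wheels, contradicting the obstruction of~\cite{PV2000} recalled in Example~\ref{ex:pv}.

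The heart of the argument is an order-by-order rigidity statement read off from the Maurer--Cartan equations~\eqref{eq:mc}. At order $1$ the equation is $\delta c_1=0$, so $c_1$ is a $\poisson$-universal-without-wheels $2$-cocycle; by Lemma~\ref{cor:cocycle} it is a multiple of $\poisson$, and the normalization that the antisymmetric part of $c_1$ equal $\poisson$ forces $c_1=\poisson=m_1$. Inductively, suppose $c_j=m_j$ for $j<k$ with $k\leq3$. The order-$k$ equation gives $\delta c_k=-\tfrac12\sum_{a+b=k}[c_a,c_b]_G$, and $m_k$ satisfies the same equation with the same right-hand side because all lower-order terms agree; hence $\delta(c_k-m_k)=0$. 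As $c_k-m_k$ is again $\poisson$-universal without wheels, Lemma~\ref{cor:cocycle} yields $c_k-m_k=a_k\,\poisson$ for some constant $a_k\in\R$ (a single real number, by universality).

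Next I would absorb the ambiguities $a_k\,\poisson$ by a reparametrization $\hbar\mapsto\hbar+\sum_{j\geq1}\lambda_j\hbar^{j+1}$. Such a reparametrization carries a $\poisson$-universal star-product without wheels to another one of the same kind---the new $c_k$ are $\R$-linear combinations of the old $c_j$ with $j\leq k$, hence again universal without wheels---and it fixes $c_1$. I would apply these in sequence: first choose $\lambda_1=-a_2$ so that the new order-$2$ term becomes $m_2$ while $c_1$ is untouched; then, having fixed $c_1=m_1$ and $c_2=m_2$, re-derive from the rigidity step that the new order-$3$ term differs from $m_3$ by some $a_3\,\poisson$, and kill it with a second reparametrization using only $\lambda_2=-a_3$, which adds $\lambda_2\poisson$ to order $3$ and leaves orders $1$ and $2$ intact. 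The bookkeeping point to watch is that the first reparametrization perturbs the order-$3$ term by a term $2\lambda_1 c_2$ that is not proportional to $\poisson$, which is precisely why the order-$3$ comparison must be repeated \emph{after} $c_1,c_2$ are locked in. The outcome is $c_1=m_1$, $c_2=m_2$, $c_3=m_3$.

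Finally, the contradiction. Because $c_\hbar$ is associative to all orders, its order-$4$ equation supplies a $\poisson$-universal-without-wheels cochain $c_4$ with $\delta c_4=-[m_1,m_3]_G-\tfrac12[m_2,m_2]_G$, that is, a $4$th order associative extension of $m_\hbar$ with no wheels, which~\cite{PV2000} shows is impossible for a general Poisson bivector. I expect the main obstacle to be the rigidity step: the whole argument rests on knowing that the cocycle freedom at each order is \emph{exactly} $a_k\,\poisson$ and nothing more, which is where Lemma~\ref{cor:cocycle} (every $\poisson$-universal-without-wheels $2$-cocycle is a multiple of $\poisson$) does the essential work. By comparison, the reparametrization bookkeeping and the final appeal to the Penkava--Vanhaecke obstruction are routine.
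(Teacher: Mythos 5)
Your proposal is correct and follows essentially the same route as the paper: order-by-order rigidity via Lemma~\ref{cor:cocycle}, absorption of the $a_k\,\poisson$ ambiguities by a change of the deformation parameter, and reduction to the Penkava--Vanhaecke order-$4$ obstruction of Example~\ref{ex:pv}. The only difference is organizational: you interleave two reparametrizations with the rigidity steps, whereas the paper tracks the ambiguities explicitly (obtaining $s_3 = m_3 + 2a_2 m_2 + a_3 m_1$, whose $2a_2 m_2$ term is exactly the $2\lambda_1 c_2$ perturbation you flag) and performs a single substitution $\hbar \ra \hbar - a_2\hbar^2 - (a_3 - 2a_2^2)\hbar^3$ at the end.
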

\begin{proof}
We proceed by contradiction assuming that $\ds s_\hbar = m_0 + \sum_{k\geq1} \hbar^k s_k$ 
is a universal quantization formula without wheels.

Associativity implies $\delta s_1=0$ and Lemma~\ref{cor:cocycle} gives that
$s_1 = a_1\, \poisson$ for some $a_1\in\R$. The normalization of the star-product,
$\frac{1}{2}\big(s_1(f,g)-s_1(g,f)\big)=\{f,g\}$, forces $a_1=1$, hence $s_1=m_1=\poisson$.

At second order in $\hbar$, the associativity of $s_\hbar$ implies 
$$
\delta s_2 + \frac{1}{2}[s_1,s_1]_G=0, \ \text{ i.e., }\ \delta s_2 + \frac{1}{2}[m_1,m_1]_G=0.
$$

Since $\frac{1}{2}[m_1,m_1]_G=-\delta m_2$,
we find that $s_2-m_2$ is a 2-cocycle and it is $\poisson$-universal without wheels. By 
Lemma~\ref{cor:cocycle} we have $s_2= m_2+ a_2 m_1$ for some real number $a_2$.

At third order in $\hbar$, we have
$$
\delta s_3 + [s_2,s_1]_G=0,\ \text{ i.e., }\ \delta s_3 + [m_2 + a_2 m_1,m_1]_G=0.
$$
{}From $[m_2,m_1]_G=-\delta m_3$ and $[m_1,m_1]_G= -2\delta m_2$
we deduce that $s_3 - m_3 - 2 a_2 m_2$ is a $\poisson$-universal 2-cocycle
without wheels. Using  Lemma~\ref{cor:cocycle} again, we find that
$s_3=m_3 + 2 a_2 m_2 + a_3 m_1$ for some real number $a_3$.

Therefore the first terms in $s_\hbar$ are necessarily of the form:
\begin{align*}
s_1&=m_1 ,\\
s_2&=m_2 + a_2 m_1, \\
s_3&=m_3 + 2 a_2 m_2 + a_3 m_1.
\end{align*}

The change of parameter $\hbar \ra \hbar - a_2 \hbar^2 - (a_3 -2 a_2^2) \hbar^3$ in $s_\hbar$ 
gives us a new universal quantization formula $s_\hbar'$ such that $s_i' = m_i$ for $i=1,2,3$.
But $m_0 + \hbar m_1 + \hbar^2 m_2 + \hbar^3 m_3$  is an associative deformation up to order 3 
that cannot be extended to order 4 (cf. Example~\ref{ex:pv}). 
Hence we have reached a contradiction.
\end{proof}

\end{document}